\newtheorem{lemma}{Lemma}[section]
\newtheorem{proposition}{Proposition}[section]
\newtheorem{remark}{Remark}[section]
\begin{document}
\title{{Volatility swaps valuation under stochastic volatility with jumps and stochastic intensity}\thanks{This work was supported by  the National Natural Science Foundation of China (11471230, 11671282).}}
\author{{Ben-zhang Yang$^a$, Jia Yue$^b$, Ming-hui Wang$^a$ and Nan-jing Huang$^a$\thanks{Corresponding author.  E-mail address: nanjinghuang@hotmail.com}}\\
{\small\it a. Department of Mathematics, Sichuan University, Chengdu, Sichuan 610064, P.R. China}\\
{\small\it b. Department of Economic Mathematics, South Western University of Finance and Economics,}\\
{\small\it Chengdu, Sichuan 610074, P.R. China}}
\date{}
\maketitle
\vspace*{-9mm}
\begin{center}
\begin{minipage}{5.8in}
{\bf Abstract.}
 In this paper, a pricing formula for volatility swaps is delivered when the underlying asset follows the stochastic volatility model with jumps and stochastic intensity. By using Feynman-Kac theorem,  a partial integral differential equation is obtained to derive the joint moment generating function of the previous model.
 Moreover,  discrete and continuous sampled volatility swap pricing formulas are given by employing transform techniques and the relationship between two pricing formulas is discussed. Finally, some numerical simulations are reported to support the results presented in this paper.
\\ \ \\
{\bf Key Words and Phrases:} Stochastic volatility model with jumps; Stochastic intensity; Volatility derivatives; Pricing.
\\ \ \\
\textbf{2010 AMS Subject Classification:} 91G20, 91G80, 60H10.

\end{minipage}
\end{center}
\section{Introduction}

Since the finance volatility has always been considered as a key measure, the development and growth of the financial market have changed the role of the volatility over last century. Volatility derivatives in general are important tools to display the market fluctuation and manage volatility risk for investors. More precisely, volatility derivatives are traded for decision-making between long or short positions, trading spreads between realized and implied volatility, and hedging against volatility risks.  The utmost advantage of volatility derivatives is their capability in providing direct exposure towards the assets volatility without being burdened with the hassles of continuous delta-hedging. Various theoretical results, numerical algorithms and applications have been studied extensively for volatility derivatives in the literature (see, for example, \cite{Carr98,Carr,Carr2,Pun,Santa,Windcliff,Zheng,Zhu11,Zhu15}).

With the rapid growth of trading of variance/volatility swaps in the past twenty years, researchers in this field attempt to construct more practical models and find more feasible methods for pricing variance/volatility swaps. Incorporating jump diffusions into models of pricing and hedging variance swaps, Carr et al. \cite{Carr2} and Huang et al. \cite{Huang} studied the existence of many small jumps that cannot be adequately modelled by using finite-activity compound Poisson processes. Cont and Kokholm \cite{Cont13} presented a model for the joint dynamics of a set of forward variance swap rates
along with the underlying index; they used L\'evy processes as building blocks and provided the tractable pricing framework for variance swaps, VIX futures, and vanilla call/put options. Zhu and Lian \cite{Zhu11} solved the discretely sampled variance swaps pricing formula under Heston's stochastic volatility model using a partial differential equation approach. Recently, Yang et al. \cite{Yang18} focus on the pricing of the variance swaps in the financial market where the stochastic interest rate and the volatility of the stock are driven by the Cox-Ingersoll-Ross model and Heston model with simultaneous L\'evy jumps, respectively.  However, to our best knowledge, there are only a few researchers to consider the pricing of volatility swaps in the literature. Recently, by employing the method for pricing discretely-sampled variance swaps in \cite{Zhu11},  Zhu and Lian \cite{Zhu15} studied analytical valuation for volatility swaps under the framework of the Heston stochastic volatility model;  they obtained a closed-form exact solution for discretely-sampled volatility swaps with the realized volatility defined as the average of the absolute percentage increment of the underlying asset price.

It is well known that jumps \cite{Cont04} and the stochastic intensity are important features of financial assets in pricing derivatives. Bates \cite{Bates} proposed the generalized model with jumps to highlight the impact of the jumps in the underlying asset. Lian and Zhu \cite{Zhu11} extended the underlying asset price process allowing the stochastic volatility with simultaneous jumps (SVSJ) model and pointed out that such a model can describe the real market better than previous one. Santa-Clata and Yan \cite{Santa} found the components of the jump intensity risk after calibrating the S\&P 500 index option prices from the beginning of 1996 to the end of 2002. Huang et al. \cite{Huang} studied the valuation of the option when the underlying asset follows the double exponential jump process with the stochastic volatility and stochastic intensity, in which the model captures the stock prices, volatility and stochastic intensity. However, as pointed out by Chang et al. \cite{Chang},  it is necessary to make the additional extension work to incorporate both the varying continuous volatility and stochastic intensity to model the characteristic of the asset price in the financial market.

The main purpose of this paper is to make an attempt to propose a new stochastic volatility model with jumps and stochastic intensity and consider the volatility swaps valuation problem described by this model. The rest of this paper is organized into five sections. We shall start with a description of our model in Section 2. Then, volatility pricing formulas for discrete and continuous samples are given in Section 3. Some numerical examples are reported in Section 4. The paper ends with conclusions in Section 5.

\section{Stochastic volatility model with jumps and stochastic intensity framework}

Let $(\Omega,\mathcal{F},\mathbb{P})$ be a probability space with a risk neutral probability $\mathbb{P}$.  Assume that the underlying asset price $S=(S_t)_{t\in [0,T]}$ with a instantaneous squared volatility $V=(V_t)_{t\in [0,T]}$ and the jump intensity process $\lambda=(\lambda_t)_{t\in [0,T]}$ of Possion process $N_t$ can be governed by the following system of SDEs:
\begin{eqnarray}\label{real}
\left\{
\begin{array}{ll}
dS_t=(r-d-\lambda_tm)S_tdt+\sqrt{V_t}S_tdW_t^S+(e^{J^S}-1)S_tdN_t,\\
dV_t=\kappa_V(\theta_V-V_t)dt+\sigma_V\sqrt{V_t}dW_t^V+J^VdN_t,\\
d\lambda_t=\kappa_{\lambda}(\theta_{\lambda}-\lambda_t)dt+\sigma_{\lambda}\sqrt{\lambda_t}dW_t^{\lambda},
\end{array}
\right.
\end{eqnarray}
where
$W^S=(W_t^S)_{t\in [0,T]}$ and $W^V=(W_t^V)_{t\in [0,T]}$ are correlated Brownian motions with a constant correlation coefficient such that the quadratic covariation between $W^S$ and $W^V$ satisfies $d [W^S,W^V]_t=\rho dt$ for some constant $\rho\in [-1,1]$, and $N_t$ is independent with $W^S=(W_t^S)_{t\in [0,T]}$ and $W^V=(W_t^V)_{t\in [0,T]}$; $r$ and $d$ denote the riskless interest rate and the constant dividend yield, respectively; $J^S$ and $J^V$ denote the jump sizes of the price and variance, respectively, in which the jump sizes are assumed to be independent with $W^S$, $W^V$ and $N_t$.  In addition, we assume that $m$ is the average jump amplitude of price with $m=E^{\mathbb{Q}}[e^{J^S}-1]$,  the mean-reverting speed parameters $\kappa_V$ and $\kappa_\lambda$ are positive constants,  the positive constants $\sigma_V$ and $\sigma_V$ are long term volatilities of $V_t$ and $\gamma_t$, respectively, the long term means $\theta_V$ and $\theta_\gamma$ are constants such that $2\kappa_V\theta_V>\sigma_V$ and $2\kappa_\lambda\theta_\lambda>\sigma_\lambda$, and the Brownian motion $W^\lambda=(W_t^\lambda)_{t\in [0,T]}$ is independent of $W^S$ and $W^V$.

We would like to point out that \eqref{real} includes several known models as special cases. In fact, if stochastic intensity process $\gamma(t)$ is a constant, then system \eqref{real} reduces to the model considered by Zhu and Lian \cite{Zhu11}, Zheng and Kwok \cite{Zheng}. Moreover,  if there the jump diffusion is removed, then system \eqref{real} reduces to the model considered by Huang et al. \cite{Huang}, and if stochastic intensity process $\gamma(t)$ is a constant and there is no jump diffusion, then system \eqref{real} reduces to the model considered by Carr et al. \cite{Carr98} and Zhu et al. \cite{Zhu15}.

Proposition \ref{theo} below completes the characteristic of the joint moment-generating function (short for MGF) of the joint processes $X_t$, $V_t$ and $\lambda_t$, in which system (\ref{real}) is converted to a forward log-asset price system by using It\^{o} lemma and MGF can be obtained by solving a partial differential equation
after employing the Feynman-Kac theorem.

\begin{proposition}\label{theo}
Let $X_t=\ln S_t$ be the log-price process. Then MGF of the joint processes $X_t$, $V_t$ and $\lambda_t$ can be defined as follows:
$$U(t,X,V,\lambda):=\mathbb{E}^{\mathbb{Q}}\left[\exp(\omega X_T+ \varphi V_T+ \psi \lambda_T+\chi)|X(t)=X,V(t)=V,\lambda(t)=\lambda\right],$$
where $\varphi$, $\psi$ and $\chi$ are constant parameters. Moreover, if
$$
\mathbb{E}^{\mathbb{Q}}\left[\exp(\omega X_T+ \varphi V_T+ \psi \lambda_T+\chi)\right]<\infty,
$$
then the value of $U(\tau,X,V,\lambda)$ at $\tau := T-t$ can be given as follows
$$U(\tau,X,V,\lambda)=\exp(\omega X+C(\tau;q)V+D(\tau;q)\lambda+E(\tau;q)),$$
where $q=(\omega,\varphi,\psi,\chi)$ and
$C(\tau;q),D(\tau;q),E(\tau;q)$ satisfy
\begin{eqnarray}\label{odes}
\left\{
\begin{aligned}
&\frac{d C(\tau;q)}{d \tau}=\frac{1}{2}\sigma_V^{2}C^{2}(\tau;q)+(\rho\sigma_V\omega-\kappa_V)C(\tau;q)+\frac{1}{2}(\omega^2-\omega)\\
&\frac{d D(\tau;q)}{d\tau}=\frac{1}{2}\sigma^2_\lambda D^{2}(\tau;q)-\kappa_\lambda D(\omega,\tau)+\Lambda(\tau;q)\\
&\frac{d E(\tau;q)}{d\tau}=(r-d)\omega+\kappa_V\theta_V C(\tau;q)+\kappa_\lambda\theta_\lambda D(\tau;q)
\end{aligned}
\right.
\end{eqnarray}
with initial conditions
\begin{equation}\label{conditions}
C(0;q)=\varphi,\quad D(0;q)=\psi,\quad E(0;q)=\chi,
\end{equation}
where
\begin{eqnarray}\label{e+}
\Lambda(\tau;q)=-m\omega+E^{\mathbb{Q}}\left[\exp(\lambda J^S+CJ^V)-1\right].
\end{eqnarray}

\end{proposition}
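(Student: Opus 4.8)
The plan is to exploit the affine structure of the model, reducing the MGF to the solution of a system of ODEs via the Feynman--Kac theorem. First I would apply It\^o's lemma for jump-diffusions to $X_t=\ln S_t$. Writing the continuous part of $dS_t$ as $(r-d-\lambda_t m)S_t\,dt+\sqrt{V_t}S_t\,dW_t^S$ and noting that a jump of size $(e^{J^S}-1)S_{t^-}$ in $S$ corresponds to a jump of size $J^S$ in $X$, this yields
\begin{equation*}
dX_t=\Big(r-d-\lambda_t m-\tfrac12 V_t\Big)dt+\sqrt{V_t}\,dW_t^S+J^S\,dN_t,
\end{equation*}
so that together with the given dynamics for $V_t$ and $\lambda_t$ we obtain an affine jump-diffusion in $(X,V,\lambda)$.

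Next I would form the infinitesimal generator $\mathcal{L}$ of $(X_t,V_t,\lambda_t)$. Using $d[W^S,W^V]_t=\rho\,dt$ and the independence of $W^\lambda$ and $N_t$, the generator acts on a smooth $U$ by
\begin{equation*}
\begin{aligned}
\mathcal{L}U={}&\Big(r-d-\lambda m-\tfrac12 V\Big)U_X+\tfrac12 V U_{XX}+\rho\sigma_V V U_{XV}+\kappa_V(\theta_V-V)U_V+\tfrac12\sigma_V^2 V U_{VV}\\
&+\kappa_\lambda(\theta_\lambda-\lambda)U_\lambda+\tfrac12\sigma_\lambda^2\lambda U_{\lambda\lambda}+\lambda\,\mathbb{E}^{\mathbb{Q}}\big[U(X+J^S,V+J^V,\lambda)-U\big].
\end{aligned}
\end{equation*}
Under the stated finiteness hypothesis, $U(t,X,V,\lambda)=\mathbb{E}^{\mathbb{Q}}[\exp(\omega X_T+\varphi V_T+\psi\lambda_T+\chi)\mid\mathcal{F}_t]$ is a martingale, so by the Feynman--Kac theorem it solves $U_t+\mathcal{L}U=0$; passing to $\tau=T-t$ turns this into the backward PIDE $U_\tau=\mathcal{L}U$.

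The key step is then the exponential-affine ansatz $U=\exp(\omega X+C(\tau;q)V+D(\tau;q)\lambda+E(\tau;q))$. Substituting and dividing by $U$, every derivative becomes a polynomial factor times $U$ (so $U_X=\omega U$, $U_V=CU$, $U_{XV}=\omega C U$, and so on), while the jump term factorizes as $U\cdot\mathbb{E}^{\mathbb{Q}}[\exp(\omega J^S+CJ^V)-1]$ because the shifts in $X$ and $V$ pull out of the exponential. Collecting the coefficients of $V$, of $\lambda$, and of the terms independent of $(X,V,\lambda)$ separately yields exactly the three equations of \eqref{odes}, with $\Lambda(\tau;q)=-m\omega+\mathbb{E}^{\mathbb{Q}}[\exp(\omega J^S+CJ^V)-1]$ (I note the exponent carries $\omega J^S$, the MGF argument, rather than $\lambda J^S$). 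The initial conditions \eqref{conditions} then follow by evaluating the ansatz at $\tau=0$, where $U=\exp(\omega X_T+\varphi V_T+\psi\lambda_T+\chi)$ forces $C(0;q)=\varphi$, $D(0;q)=\psi$, $E(0;q)=\chi$.

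The main obstacle is less the algebra than its justification. One must argue that the affine ansatz is admissible, i.e.\ that the separation into $V$-, $\lambda$-, and constant-coefficient balances is legitimate; this works precisely because the drift, diffusion, and jump compensator of the system are affine in $(V,\lambda)$, so each coefficient equation closes on itself. I would be especially careful with the jump term, ensuring the jump sizes $J^S,J^V$ are integrated out under the correct risk-neutral law and that the compensator $\lambda\,\mathbb{E}^{\mathbb{Q}}[\,\cdot\,]$ feeds the $D$-equation (through $\Lambda$) rather than the constant equation. Finally, the finiteness condition on the expectation is what guarantees the martingale property together with enough integrability to invoke Feynman--Kac and to differentiate under the expectation.
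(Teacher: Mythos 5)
Your proposal is correct and follows essentially the same route as the paper's own proof: pass to the log-price dynamics via It\^{o}'s lemma, obtain the backward PIDE for $U$ through the Feynman--Kac theorem, substitute the exponential-affine ansatz, and collect the coefficients of $V$, of $\lambda$, and of the constant terms to arrive at the system \eqref{odes} with initial conditions \eqref{conditions}. The two points where you deviate are in fact corrections of typos in the paper: the jump in $X_t$ is $J^S\,dN_t$ (the paper writes $(e^{J^S}-1)\,dN_t$ in its transformed SDE but then correctly uses $X+J^S$ inside the PIDE's jump expectation), and the exponent in $\Lambda$ should indeed be $\omega J^S+CJ^V$ rather than $\lambda J^S+CJ^V$, as the paper's own closed-form examples of $\Lambda$ confirm.
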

\begin{proof}
From (\ref{real}), we have the following stochastic differential equation with respect to $X(t)$, $V(t)$ and $\lambda(t)$:
\begin{eqnarray}\label{neural}
\left\{
\begin{array}{ll}
dX_t=(r-d-\lambda_tm-\frac{1}{2}V_t)dt+\sqrt{V_t}dW_t^S+(e^{J^S}-1)dN_t,\\
dV_t=\kappa_V(\theta_V-V_t)dt+\sigma_V\sqrt{V_t}dW_t^V+J^VdN_t,\\
d\lambda_t=\kappa_{\lambda}(\theta_{\lambda}-\lambda_t)dt+\sigma_{\lambda}\sqrt{\lambda_t}dW_t^{\lambda}.
\end{array}
\right.
\end{eqnarray}

Thus, by applying It\^{o} lemma to $U(t,X_t,V_t,r_t)$, we can obtain a partial integral-differential equation (PIDE) for $U(t,X,V,\lambda)$ as follows
\begin{eqnarray}\label{PIDEt}
0&= & \frac{\partial U}{\partial t}+(r-d-\lambda m-\frac{1}{2}V)\frac{\partial U}{\partial X}+[\kappa_V(\theta_V-V)]\frac{\partial U}{\partial V}+[\kappa_\lambda(\theta_\lambda-\lambda)]\frac{\partial U}{\partial \lambda}\nonumber\\
&& \mbox{} +\frac{1}{2}V\frac{\partial^{2} U}{\partial X^{2}}
+\rho \sigma_V V\frac{\partial^{2} U}{\partial X \partial V}
+\frac{1}{2}\sigma_V^{2}V\frac{\partial^{2} U}{\partial V^{2}}
+\frac{1}{2}\sigma_{\lambda}^{2}\lambda\frac{\partial^{2} U}{\partial \lambda^{2}}\nonumber\\
&& \mbox{} +\lambda E^{\mathbb{Q}}\left[U(X+J^S,V+J^V,\lambda,t)-U(X,V,\lambda,t)\right].
\end{eqnarray}
Denoting $\tau=T-t$, we get
\begin{eqnarray}\label{PIDEt}
\frac{\partial U}{\partial \tau}&= & (r-d-\lambda m-\frac{1}{2}V)\frac{\partial U}{\partial X}+[\kappa_V(\theta_V-V)]\frac{\partial U}{\partial V}+[\kappa_\lambda(\theta_\lambda-\lambda)]\frac{\partial U}{\partial \lambda}\nonumber\\
&& \mbox{} +\frac{1}{2}V\frac{\partial^{2} U}{\partial X^{2}}
+\rho \sigma_V V\frac{\partial^{2} U}{\partial X \partial V}
+\frac{1}{2}\sigma_V^{2}V\frac{\partial^{2} U}{\partial V^{2}}
+\frac{1}{2}\sigma_{\lambda}^{2}\lambda\frac{\partial^{2} U}{\partial \lambda^{2}}\nonumber\\
&& \mbox{} +\lambda E^{\mathbb{Q}}\left[U(X+J^S,V+J^V,\lambda,\tau)-U(X,V,\lambda,\tau)\right].
\end{eqnarray}

Due to the affine structure in the SVSJ model, (\ref{PIDEt}) admits an analytic solution of the following form:
\begin{equation}\label{affine}
U(\tau,X,V,\lambda)=\exp(\omega X+C(\tau;q)V+D(\tau;q)\lambda+E(\tau;q))
\end{equation}
with the initial condition
$$U(0,X,V,\lambda)=\exp(\omega X+ \varphi V+ \psi \lambda+\chi).$$
Combining (\ref{affine}) with (\ref{PIDEt}), we know that $C(\tau;q),D(\tau;q),E(\tau;q)$ satisfy the system (\ref{odes}) with initial conditions (\ref{conditions}).
\end{proof}

\begin{remark}\label{r21}
Some exact expressions of $\Lambda(\tau;q)$ can be obtained in some special cases. In fact,
\begin{itemize}
\item[$(i)$] if the jump sizes $J^S$ and $J^V$ have independent asymmetric double exponential distributions with density functions
$$f(y)=p\eta_1e^{-\eta_1}\mathbf{1}_{\{y\geq 0\}}+q\eta_2e^{-\eta_2}\mathbf{1}_{\{y\leq 0\}},\quad \eta_1>1,\eta_2>0$$
and
$$f(y)=p^{,}\eta_3e^{-\eta_3}\mathbf{1}_{\{y\geq 0\}}+q^{,}\eta_4e^{-\eta_4}\mathbf{1}_{\{y\leq 0\}},\quad \eta_3>1,\eta_4>0,$$
respectively, where $p, q, p',q'\geq 0 $ with $p+q=1$ and $p'+q'=1$ represent the probabilities
of upward and downward jumps, respectively,  then the value of $\Lambda(\tau;q)$ in formula \eqref{e+} has the following form
$$\Lambda(\tau;q)=\left(\frac{p\eta_1}{\eta_1-1}+\frac{q\eta_2}{\eta_2+1}\right)^{\omega}
\left(\frac{p^,\eta_3}{\eta_3-1}+\frac{q^,\eta_4}{\eta_4+1}\right)^C-
\left(\frac{p\eta_1}{\eta_1-1}+
\frac{q\eta_2}{\eta_2+1}-1\right)\omega-1.$$

\item[$(ii)$] if $J^V\sim \exp(1/\eta)$ (exponential distribution with parameter rate $1/\eta$) and $J^S$ satisfies
$$J^S|J^V\sim N(\nu+\rho_J J^V,\delta^2),$$
that is,  the Gaussian distribution with mean $\nu+\rho_J J^V$ and variance $\delta^2$, then the value of $\Lambda(\tau;q)$ in formula \eqref{e+} has the following form
$$\Lambda(\tau;q)=\exp\left(\omega \nu+\frac{\delta^2\omega^2}{2}\right)\frac{1}{1-(\rho_J\omega+C)\eta}
-\left(\frac{e^{\nu+\rho_J J^V}}{1-\eta\rho_J}-1\right)\omega
-1,$$
where $\text{Re}((\rho_J\omega+C)\eta)<1$.
\end{itemize}
\end{remark}

\begin{remark}\label{r21}
We would like to mention that it suffices to use the marginal MGFs to price discretely sampled vanilla volatility swaps due to
their simpler payoff structure. Once the joint MGF is known, the respective marginal MGF can be obtained easily by setting the irrelevant parameters in the joint MGF to be zero. For example, the marginal MGF with respect to the state variable V can be
obtained by setting $\omega=\psi=\chi=0.$
\end{remark}

\begin{remark}
It is remarkable that when the stochastic intensity process is a constant ($\kappa_\lambda=\theta_\lambda=\sigma_\lambda$=0), the MGF can be found in \cite{Zheng} under this special case.  When the jump diffusion of variance process is removed, the MGF can be found in \cite{Huang} under this special condition. Moreover, if the stochastic intensity process is a constant and the jump diffusion of variance process is removed, the MGF can be gotten in \cite{Zhu15}.
\end{remark}
\section{Pricing Volatility Swaps}

In this section, we begin with an analytical solution approach to determine the fair price of the volatility swap which is modeled by \eqref{real}. Then we present the relationship between of discrete and continuous samples by our pricing approach.
\subsection{Volatility Swaps}

A volatility swap is a forward contract on realized historical volatility of the specified underlying equity index. The amount paid at expiration is based on a notional amount times the difference between the realized volatility and implied volatility. More specifically, assuming the current time is 0, the value of a volatility swap at expiry can be written as
$(RV-Kvol)\times L$, where the $RV$ is the annualized realized volatility over the contract life, Kvol is the annualized delivery price for the volatility swap, which is set to make the value of a volatility swap equal to zero for both
long and short positions at the time the contract is initially entered.
To a certain extent, it reflects market¡¯s expectation of the realized volatility in the future. The value $L$ is the notional amount of the swap in dollars per annualized volatility point squared and the realized volatility is always discretely sampled over a time period.

In this paper, we adopt the formulation of the realized volatility introduced by Barndorff-Nielsen and  Shephard (\cite{Barndorff-Nielsen0, Barndorff-Nielsen1,Barndorff-Nielsen2}). To start, we let $T$ as the total sampling period. Then we split equally the period to several fixed equal time interval $\triangle t$ and get $N$ different tenors $t_i=i\triangle t$ $(i=1,\cdots,N)$.
The realized volatility can be defined as follows:
$$RV=\sqrt{\frac{\pi}{2NT}}\sum_{i=1}^{N}\left|\frac{S_{t_i}-S_{t_{i-1}}}{S_{t_{i-1}}}\right|\times100.$$
As pointed out by Barndorff-Nielsen and  Shephard \cite{Barndorff-Nielsen0, Barndorff-Nielsen1, Barndorff-Nielsen2}, this definition is a more robust measurement of realized volatility have studied theoretical properties of realized volatility and obtained some closed-form solutions for pricing volatility derivatives under this formulation. By using risk neutral pricing theory, the value of RV can be obtained easily by
$$ K=E^{\mathbb{Q}} [RV]=\sqrt{\frac{\pi}{2NT}}\sum_{i=1}^{N} E^{\mathbb{Q}}
\left[\left|\frac{S_{t_i}-S_{t_{i-1}}}{S_{t_{i-1}}}\right|\right]\times100.$$

\begin{remark}\label{311} As mentioned by Windcliff et al. \cite{Windcliff}, there are at least two different measures of the realized volatility.
The realized volatility can be also defined as follows (\cite{Windcliff}):
$$RV^*=\sqrt{\frac{AF}{N}\sum_{i=1}^{N}\left(\frac{S_{t_i}-S_{t_{i-1}}}{S_{t_{i-1}}}\right)^2}\times100.$$
One can find that $RV\leq \sqrt{2/\pi}RV^*$ by utilizing the Cauchy inequality. This shows that the value $\sqrt{\pi/2}RV$ is a lower bound of $RV^*$.
\end{remark}

We would like to mention that $RV$ and $RV^*$ are slightly distinct when they are used to measure the realized volatility. In fact, the
definition of $RV^*$ is essentially calculated as the square root of average realized variance and it was used to term the volatility
swap contract by calculating realized volatility as a standard derivation swap \cite{Howison,Zhu15}. However,  $RV$ is the average of the realized volatility and it was employed to term this volatility swap as a volatility-average swap \cite{Howison,Zhu15}. For more work related to the realized volatility, we refer the reader to
\cite{Barndorff-Nielsen0,Barndorff-Nielsen1,Barndorff-Nielsen2,Broadie,Carr07,Carr13,Swishchuk} and the references therein.

Now, we turn to study the pricing problem for volatility swaps under the definition of $RV$. To obtain the value of volatility swaps, we first demonstrate the derivation of the characteristic function. Let us introduce a new stochastic variable as follows:
$$
Y_{t_{i-1},t_i}=\ln S_{t_i}-\ln S_{t_{i-1}}=X_{t_i}-X _{t_{i-1}}.
$$
Denote the probability density function of $Y_{t_{i-1},t_i}$ by $p(Y_{t_{i-1},t_i})$. Then $p(Y_{t_{i-1},t_i})$ can be easily obtained by the inverse Fourier transform with regard to the MGF. The probability of the event $Q_i:=P\{Y_{t_{i-1},t_i}>0\}$ can be carried out by following formula:
$$
Q_i=\int_0^\infty p(Y_{t_{i-1},t_i}) dY_{t_{i-1},t_i}=\frac{1}{2}+\frac{1}{\pi}\int_{0}^{\infty}\text{Re}\left[\frac{U(\triangle t,\omega i,V,\lambda)}{\omega i} \right]d\omega,
$$
where $\triangle t=t_i-t_{i-1}$. It is remarkable that the characteristic function $U(\triangle t,\omega i,V,\lambda)$ is obtained in each period $[t_{i-1},t_i]$. For each interval $[t_{i-1},t_i]$, we can use iteration methods to calculate the value of $U(\triangle t,\omega i,V,\lambda)$.
\begin{lemma}
Let
$$
q(Y_{t_{i-1},t_i})=\exp(Y_{t_{i-1},t_i}-r\triangle t)p(Y_{t_{i-1},t_i}).
$$
Then $q(Y_{t_{i-1},t_i})$ is a probability density function of the stochastic variable $Y_{t_{i-1},t_i}$.
\end{lemma}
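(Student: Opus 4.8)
The plan is to verify the two defining properties of a probability density: nonnegativity and unit total mass. Nonnegativity is immediate, since $\exp(Y_{t_{i-1},t_i}-r\triangle t)>0$ and $p(Y_{t_{i-1},t_i})\ge 0$ because $p$ is itself a density. The whole content of the lemma therefore lies in the normalization
\[
\int_{-\infty}^{\infty} q(Y_{t_{i-1},t_i})\,dY_{t_{i-1},t_i}
= e^{-r\triangle t}\int_{-\infty}^{\infty} e^{Y_{t_{i-1},t_i}}\,p(Y_{t_{i-1},t_i})\,dY_{t_{i-1},t_i}
= e^{-r\triangle t}\,\mathbb{E}^{\mathbb{Q}}\!\big[e^{Y_{t_{i-1},t_i}}\big],
\]
so it suffices to show $\mathbb{E}^{\mathbb{Q}}[e^{Y_{t_{i-1},t_i}}]=e^{r\triangle t}$ (taking $d=0$; with a nonzero dividend the identical computation yields $e^{(r-d)\triangle t}$, so the normalizing exponent should really read $(r-d)\triangle t$).

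First I would recall that $e^{Y_{t_{i-1},t_i}}=S_{t_i}/S_{t_{i-1}}$, so that $\mathbb{E}^{\mathbb{Q}}[e^{Y_{t_{i-1},t_i}}]$ is exactly the conditional moment $\mathbb{E}^{\mathbb{Q}}[S_{t_i}/S_{t_{i-1}}\mid\mathcal{F}_{t_{i-1}}]$. This is the affine transform of Proposition \ref{theo} evaluated at the parameter value $\omega=1$, $\varphi=\psi=\chi=0$, for which $U(\triangle t,X,V,\lambda)=\exp(X+C(\triangle t)V+D(\triangle t)\lambda+E(\triangle t))$ and hence $\mathbb{E}^{\mathbb{Q}}[S_{t_i}/S_{t_{i-1}}\mid\mathcal{F}_{t_{i-1}}]=\exp\big(C(\triangle t)V_{t_{i-1}}+D(\triangle t)\lambda_{t_{i-1}}+E(\triangle t)\big)$.

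I would then solve the system \eqref{odes} at $\omega=1$ with the zero initial data \eqref{conditions}. The decisive observation is that at $\omega=1$ the forcing term $\tfrac12(\omega^2-\omega)$ in the $C$-equation vanishes, so $C\equiv 0$ solves its Riccati equation with $C(0)=0$ (uniqueness being clear since the right-hand side is a polynomial in $C$). Feeding $\omega=1$, $C=0$ into \eqref{e+} gives $\Lambda=-m+\mathbb{E}^{\mathbb{Q}}[e^{J^S}-1]=0$ by the very definition $m=\mathbb{E}^{\mathbb{Q}}[e^{J^S}-1]$; consequently the $D$-equation reduces to a homogeneous Riccati equation and forces $D\equiv 0$ as well. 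The $E$-equation then collapses to $dE/d\tau=r-d$ with $E(0)=0$, i.e. $E(\triangle t)=(r-d)\triangle t$. Substituting back yields $\mathbb{E}^{\mathbb{Q}}[e^{Y_{t_{i-1},t_i}}]=e^{(r-d)\triangle t}$, which is $e^{r\triangle t}$ when $d=0$, closing the normalization.

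The main obstacle, and really the only substantive step, is establishing $\mathbb{E}^{\mathbb{Q}}[e^{Y_{t_{i-1},t_i}}]=e^{r\triangle t}$; conceptually this is just the statement that the (dividend-adjusted) underlying is a $\mathbb{Q}$-martingale, the jump compensator $-\lambda_t m$ in the drift of $S_t$ being exactly what cancels the jump contribution $\mathbb{E}^{\mathbb{Q}}[e^{J^S}-1]\,\lambda_t$, which is what makes $\Lambda$ vanish at $\omega=1$. I would also check the integrability hypothesis $\mathbb{E}^{\mathbb{Q}}[e^{Y_{t_{i-1},t_i}}]<\infty$ required to invoke Proposition \ref{theo} at $\omega=1$; this holds as soon as the jump law admits the finite exponential moment $\mathbb{E}^{\mathbb{Q}}[e^{J^S}]<\infty$, which is satisfied in each of the concrete jump specifications recorded above.
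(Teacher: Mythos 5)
Your proof is correct and, in its conceptual core, rests on the same fact as the paper's (the risk-neutral growth of the underlying), but it takes a genuinely different and far more self-contained route. The paper's proof is essentially one line: nonnegativity is declared obvious, and the unit-mass condition is asserted directly from the martingale property of the discounted price, written there as $E^{\mathbb{Q}}[\exp(-\triangle t)S_{t_i}/S_{t_{i-1}}]=1$ (the factor $r$ is even misprinted out of the exponent), with no derivation and no accounting for where that identity comes from in this jump model. You instead derive the required moment $\mathbb{E}^{\mathbb{Q}}[e^{Y_{t_{i-1},t_i}}]$ from Proposition \ref{theo} by evaluating the affine transform at $\omega=1$, $\varphi=\psi=\chi=0$ and solving \eqref{odes}--\eqref{conditions} explicitly: $C\equiv 0$ because the forcing term $\frac{1}{2}(\omega^2-\omega)$ vanishes at $\omega=1$, then $\Lambda=0$ in \eqref{e+} precisely because $m=\mathbb{E}^{\mathbb{Q}}[e^{J^S}-1]$, hence $D\equiv 0$ and $E(\tau)=(r-d)\tau$. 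This buys two things the paper does not have: (i) an actual verification, within the model, that the jump compensator $-\lambda_t m$ in the drift is exactly what restores the martingale property, rather than an appeal to a general principle; and (ii) the observation that with a nonzero dividend yield the correct normalizing factor is $e^{-(r-d)\triangle t}$, so the lemma as stated (and the factor $e^{r\triangle t}$ propagated into the proof of Proposition \ref{prop31}) is exact only when $d=0$ --- a genuine, if minor, inconsistency in the paper, and not a vacuous one, since the numerical section takes $d=0.005$. Your closing remark that one needs $\mathbb{E}^{\mathbb{Q}}[e^{J^S}]<\infty$ to invoke Proposition \ref{theo} at $\omega=1$ likewise makes explicit an integrability hypothesis the paper uses silently.
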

\begin{proof}
Obviously, $q(Y_{t_{i-1},t_i})\geq0$. Since
$$E^{\mathbb{Q}}[\exp(-\triangle t)S_{t_i}/S_{t_{i-1}}]=1,$$
we have
$$\int_{-\infty}^{+\infty}q(Y_{t_{i-1},t_i})dY_{t_{i-1},t_i}=1.$$
This shows that $q(Y_{t_{i-1},t_i})$ is a probability density function of the stochastic variable $Y_{t_{i-1},t_i}$.
\end{proof}

\begin{lemma}
Let
$$
\widetilde{Q_i}=\int_0^\infty q(Y_{t_{i-1},t_i})dY_{t_{i-1},t_i}.
$$
Then
$$\widetilde{Q_i}=\frac{1}{2}+\frac{1}{\pi}\int_{0}^{\infty} \text{Re}\left[\frac{\exp(-r\triangle t)U(\triangle t,\omega i+1,V,\lambda)}{\omega i}           \right]d\omega.$$
\end{lemma}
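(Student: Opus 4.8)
The plan is to derive $\widetilde{Q_i}$ in exactly the same manner as the probability $Q_i$ was obtained earlier, namely through the Gil--Pelaez type Fourier inversion formula, but now applied to the new density $q$ rather than to $p$. The starting point is the formula already recorded in the text,
$$Q_i = \frac{1}{2} + \frac{1}{\pi}\int_0^\infty \text{Re}\left[\frac{U(\triangle t, \omega i, V, \lambda)}{\omega i}\right]d\omega,$$
which expresses $P\{Y_{t_{i-1},t_i}>0\}$ in terms of the characteristic function of $Y_{t_{i-1},t_i}$ under $p$. Observe that this characteristic function is precisely $U(\triangle t, \omega i, V, \lambda)$, i.e.\ the joint MGF of Proposition \ref{theo} evaluated at the imaginary transform argument $\omega i$ (with the remaining parameters set so as to isolate the increment $Y_{t_{i-1},t_i}$). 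Since the previous lemma establishes that $q$ is itself a genuine probability density of $Y_{t_{i-1},t_i}$, the same inversion formula applies verbatim once its characteristic function has been computed.

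The key step is therefore to compute the characteristic function of $Y_{t_{i-1},t_i}$ under $q$. Using $q(Y_{t_{i-1},t_i}) = \exp(Y_{t_{i-1},t_i} - r\triangle t)\,p(Y_{t_{i-1},t_i})$ and integrating against $e^{\omega i Y_{t_{i-1},t_i}}$, I would write, abbreviating $Y = Y_{t_{i-1},t_i}$,
$$\int_{-\infty}^{+\infty} e^{\omega i Y}\,q(Y)\,dY = e^{-r\triangle t}\int_{-\infty}^{+\infty} e^{(\omega i + 1)Y}\,p(Y)\,dY = e^{-r\triangle t}\,U(\triangle t, \omega i + 1, V, \lambda),$$
where the last equality simply recognizes the inner integral as the MGF of $Y_{t_{i-1},t_i}$ under $p$ evaluated at the shifted argument $\omega i + 1$, which is again furnished by $U$ through Proposition \ref{theo}. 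Thus the only effect of passing from $p$ to $q$ is to shift the transform variable by $1$ and to multiply by the discount factor $e^{-r\triangle t}$. Substituting this characteristic function into the inversion formula then yields
$$\widetilde{Q_i} = \frac{1}{2} + \frac{1}{\pi}\int_0^\infty \text{Re}\left[\frac{\exp(-r\triangle t)\,U(\triangle t, \omega i + 1, V, \lambda)}{\omega i}\right]d\omega,$$
which is the claimed identity.

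I expect the main technical point to be the justification of the characteristic-function computation and the applicability of the inversion formula rather than the algebra itself. One must check that $e^{(\omega i+1)Y}$ is $p$-integrable, so that the MGF at the argument $\omega i + 1$ is finite; this is guaranteed by the integrability hypothesis assumed in Proposition \ref{theo}, which is exactly what makes the normalization $\mathbb{E}^{\mathbb{Q}}[\exp(-\triangle t)S_{t_i}/S_{t_{i-1}}]=1$ used in the previous lemma legitimate. One must also verify that the interchange of the $Y$-integration with the Fourier inversion is valid. Once these integrability conditions are secured, the remainder is a direct transcription of the derivation of $Q_i$, with $U(\triangle t,\omega i,V,\lambda)$ replaced throughout by $e^{-r\triangle t}U(\triangle t,\omega i+1,V,\lambda)$.
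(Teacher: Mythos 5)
Your proposal is correct and follows essentially the same route as the paper: compute the characteristic function of $q$ by observing that the exponential tilting $q = e^{Y - r\triangle t}p$ shifts the transform argument to $\omega i + 1$ and introduces the factor $e^{-r\triangle t}$, then apply the same Gil--Pelaez inversion formula used for $Q_i$. The integrability remarks you add are sound extra care but do not change the argument.
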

\begin{proof}
Denote the corresponding characteristic function of $q(Y_{t_{i-1},t_i})$ by $\widetilde{U}(\triangle t,\omega ,V,\lambda)$.
By using Fourier transform with regard to the probability density function $q(Y_{t_{i-1},t_i})$, we have
\begin{eqnarray*}
\widetilde{U}(\triangle t,\omega ,V,\lambda)&=&\mathcal{F}\left[\exp(Y_{t_{i-1},t_i}-r\triangle t)p(Y_{t_{i-1},t_i})\right]\nonumber\\
&=&\exp(-r\triangle t)\mathcal{F}\left[\exp(Y_{t_{i-1},t_i})p(Y_{t_{i-1},t_i})\right]\nonumber\\
&=& \exp(-r\triangle t)\int_{-\infty}^{+\infty}\exp(i\omega Y_{t_{i-1},t_i})\exp(Y_{t_{i-1},t_i})p(Y_{t_{i-1},t_i})dY_{t_{i-1},t_i}\\
&=&\exp(-r\triangle t)U(\triangle t,\omega i+1 ,V,\lambda).
\end{eqnarray*}
It follows that
$$
\widetilde{Q_i}=\frac{1}{2}+\frac{1}{\pi}\int_{0}^{\infty}\text{Re}\left[\frac{\exp(-r\triangle t)U(\triangle t,\omega i+1 ,V,\lambda)}{\omega i}           \right]d\omega.
$$
This completes the proof.   \end{proof}

\begin{proposition}\label{prop31}
The fair strike value of volatility swap can be given as follows
$$K=\sqrt{\frac{\pi}{2NT}}\int_0^\infty\sum_{i=1}^{N}\text{Re}\left[\frac{U(\triangle t,\omega i+1 ,V,\lambda)-U(\triangle t,\omega i ,V,\lambda)}{\omega i} \right]d\omega\times100.$$
\end{proposition}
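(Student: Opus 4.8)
The plan is to reduce the computation of the fair strike to the two Gil--Pelaez–type inversion formulas established in the preceding lemmas. Starting from the risk-neutral expression $K=\sqrt{\pi/(2NT)}\,\sum_{i=1}^N E^{\mathbb{Q}}\big[\,|(S_{t_i}-S_{t_{i-1}})/S_{t_{i-1}}|\,\big]\times 100$, I would first rewrite each summand in terms of the log-increment $Y:=Y_{t_{i-1},t_i}$, using $(S_{t_i}-S_{t_{i-1}})/S_{t_{i-1}}=e^{Y}-1$. Since $e^{Y}-1$ and $Y$ share the same sign, the absolute value splits cleanly at the threshold $Y=0$:
\[
E^{\mathbb{Q}}\big[|e^{Y}-1|\big]=\int_0^\infty (e^{Y}-1)\,p(Y)\,dY+\int_{-\infty}^0 (1-e^{Y})\,p(Y)\,dY .
\]
This is the key move, because the resulting domains $\{Y>0\}$ and $\{Y<0\}$ match exactly those defining $Q_i$ and $\widetilde{Q_i}$.

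Next I would identify each of the four integrals with a quantity already available. The two integrals of $p$ give $\int_0^\infty p=Q_i$ and $\int_{-\infty}^0 p=1-Q_i$. For the two integrals weighted by $e^{Y}$ I would invoke the first lemma: since $q(Y)=e^{\,Y-r\Delta t}p(Y)$ is a density, $\int_0^\infty e^{Y}p=e^{r\Delta t}\widetilde{Q_i}$, and the normalization $\int_{-\infty}^{\infty} e^{Y}p=e^{r\Delta t}$ (the no-arbitrage relation $E^{\mathbb{Q}}[e^{Y}]=e^{r\Delta t}$ underlying that lemma) yields $\int_{-\infty}^0 e^{Y}p=e^{r\Delta t}(1-\widetilde{Q_i})$. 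Collecting the four terms produces the compact identity
\[
E^{\mathbb{Q}}\big[|e^{Y}-1|\big]=2e^{r\Delta t}\widetilde{Q_i}-2Q_i+1-e^{r\Delta t}.
\]

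Then I would substitute the inversion formulas for $Q_i$ and $\widetilde{Q_i}$. Because $e^{r\Delta t}$ is real it passes through $\mathrm{Re}[\,\cdot\,]$ and cancels the factor $e^{-r\Delta t}$ sitting inside $\widetilde{U}(\Delta t,\omega,V,\lambda)$, while the $\omega$-independent constants (the two $\tfrac12$'s together with the bare $1$ and $e^{r\Delta t}$) cancel in pairs. What survives is
\[
E^{\mathbb{Q}}\big[|e^{Y}-1|\big]=\frac{2}{\pi}\int_0^\infty \mathrm{Re}\!\left[\frac{U(\Delta t,\omega i+1,V,\lambda)-U(\Delta t,\omega i,V,\lambda)}{\omega i}\right]d\omega .
\]
Finally I would interchange the finite sum over $i$ with the $\omega$-integral (Fubini) and combine the $1/\pi$ factors coming from the two inversions with the prefactor $\sqrt{\pi/(2NT)}\times 100$ to reach the asserted closed form.

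I expect the main obstacle to be analytic rather than algebraic: one must control the near-origin behaviour of the integrands $U(\Delta t,\omega i+1,\cdot)/(\omega i)$ and $U(\Delta t,\omega i,\cdot)/(\omega i)$ so that each separate inversion integral converges and the Fubini interchange is legitimate. The apparent $1/\omega$ singularity at $\omega=0$ must be shown to be removable after taking real parts, exactly as in the classical Gil--Pelaez argument; integrability of $U$ is guaranteed by the finiteness hypothesis of Proposition~\ref{theo}. The sign-splitting and the cancellation of real constants are then routine, the only bookkeeping to watch being that the $\omega$-independent real constants are extracted from inside $\mathrm{Re}[\,\cdot\,]$ before they are cancelled, and that the accompanying factor $2/\pi$ is correctly absorbed into the final prefactor.
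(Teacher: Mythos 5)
Your proposal is correct and follows essentially the same route as the paper's own proof: split the absolute value at $Y_{t_{i-1},t_i}=0$, convert the integrals weighted by $e^{Y}$ into integrals of the density $q$ via $e^{Y}p = e^{r\triangle t}q$, substitute the two Gil--Pelaez inversion formulas from the preceding lemmas so that the constants cancel, and then interchange the finite sum with the $\omega$-integral. One bookkeeping caveat: carried out exactly, both your argument and the paper's derivation produce a factor $\tfrac{2}{\pi}$ in front of the integral (so the prefactor should be $\sqrt{2/(\pi NT)}$ rather than $\sqrt{\pi/(2NT)}$), a factor the paper silently drops in its final chain of equalities; it is not literally ``absorbed into the final prefactor'' as you assert, but this discrepancy is inherited from the stated formula itself rather than being a flaw in your reasoning.
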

\begin{proof}
For each time interval $[t_{i-1},t_i]$, we have
\begin{eqnarray*}
E^{\mathbb{Q}} \left[\left|\frac{S_{t_i}}{S_{t_{i-1}}}-1\right|\right]
&=&\int_{0}^{\infty}\left| \exp(Y_{t_{i-1},t_i})-1 \right| p(Y_{t_{i-1},t_i}) dY_{t_{i-1},t_i}\nonumber\\
&=& \int_{0}^{+\infty}\left(\exp(Y_{t_{i-1},t_i})-1 \right) p(Y_{t_{i-1},t_i}) dY_{t_{i-1},t_i}\\
&& \mbox{}+\int_{-\infty}^{0}\left(1-\exp(Y_{t_{i-1},t_i})\right) p(Y_{t_{i-1},t_i}) dY_{t_{i-1},t_i}\nonumber\\
&=& -\int_{0}^{+\infty} p(Y_{t_{i-1},t_i}) dY_{t_{i-1},t_i}
+e^{r\triangle t}\int_{0}^{+\infty} q(Y_{t_{i-1},t_i}) dY_{t_{i-1},t_i}\\
&& \mbox{}+\int_{-\infty}^{0}p(Y_{t_{i-1},t_i})dY_{t_{i-1},t_i}+e^{r\triangle t}\int_{-\infty}^{0}q(Y_{t_{i-1},t_i})dY_{t_{i-1},t_i}\\
&=& 1-2\int_{0}^{+\infty} p(Y_{t_{i-1},t_i}) dY_{t_{i-1},t_i}\\
&& \mbox{}+e^{r\triangle t}\left(2\int_{0}^{+\infty}q(Y_{t_{i-1},t_i})dY_{t_{i-1},t_i}-1\right)\\
&=& \frac{2}{\pi}\int_0^\infty\text{Re}\left[\frac{U(\triangle t,\omega i+1 ,V,\lambda)-U(\triangle t,\omega i ,V,\lambda)}{\omega i} \right]d\omega.
\end{eqnarray*}

This leads to the final pricing formula for the volatility swaps in the following form:
\begin{eqnarray*}
K&=&E^{\mathbb{Q}}[RV]=\sqrt{\frac{\pi}{2NT}}\sum_{i=1}^{N} E^{\mathbb{Q}}
\left[\left|\frac{S_{t_i}-S_{t_{i-1}}}{S_{t_{i-1}}}\right|\right]\times100\\
&=&\sqrt{\frac{\pi}{2NT}}\sum_{i=1}^{N}\int_0^\infty\text{Re}\left[\frac{U(\triangle t,\omega i+1 ,V,\lambda)-U(\triangle t,\omega i ,V,\lambda)}{\omega i}\right]d\omega\times100\\
&=&\sqrt{\frac{\pi}{2NT}}\int_0^\infty\sum_{i=1}^{N}\text{Re}\left[\frac{U(\triangle t,\omega i+1 ,V,\lambda)-U(\triangle t,\omega i ,V,\lambda)}{\omega i}\right]d\omega\times100.
\end{eqnarray*}
This completes the proof. \end{proof}

\subsection{Convergence of pricing scheme}

To understand the properties of realized volatility and get the pricing formula under continuous samplings, we need to recall the well know connection between realized variance and quadratic variation. Especially, in this study, we progress to work a more general aggregate volatility measure. We recall the definition of realized power variation introduced by Barndorff-Nielsen. Considering mentioned above, we work on the time interval from $0$ to $T$, and assume we have observations every $\triangle t>0$ period of time. The representation of the $u$-th order variation process ($u>0$) is given by
\begin{equation}\label{rpv}
\{X\}^{[u]}(t)=p-\left(\lim_{\triangle t \downarrow 0}(\triangle t)^{1-u/2}\sum_{i=1}^{\left \lfloor T/\triangle t \right \rfloor} |Y_{t_{i-1},t_i}|^u\right).
\end{equation}
Here, for any real number $a$, $\left \lfloor a \right \rfloor $ denotes the largest integer than or equal to a. Note that the normalisation $(\triangle t)^{1-u/2}$ is essential in power variation. In detail, the normalisation is one and so disappears when $u=2$, the normalisation goes off to infinity when $u>2$ and goes to zero when $u<2$ as $\triangle t \downarrow 0$ . The key property of power variation for stochastic volatility model is given as allows.

\begin{lemma}\label{l3.3} Let
$$X^{(1)}_t=\int_0^t(r-d-\lambda_s-\frac{1}{2}V_s)ds+\int_0^t\sqrt{V_s}dW_s^S.$$
Then
$$
\{X^{(1)}\}_i^{[u]}(t)=\mu_r\int_0^t V^{\frac{r}{2}}_s ds,
$$
where
$$
\mu_r=\mathbb{E}|v|^u=2^{u/2}\frac{\Gamma(\frac{u+1}{2})}{\Gamma(\frac{1}{2})}
$$
for $u>0$ with $v\sim N(0,1).$
\end{lemma}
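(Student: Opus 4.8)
The plan is to treat this as a high-frequency (infill) limit theorem in the spirit of Barndorff-Nielsen and Shephard, exploiting that $X^{(1)}$ is a continuous It\^o process whose only martingale part is $M_t=\int_0^t\sqrt{V_s}\,dW_s^S$ with quadratic variation $\int_0^t V_s\,ds$. First I would show that the finite-variation (drift) part of each increment is asymptotically irrelevant. Writing $Y_{t_{i-1},t_i}=A_i+M_i$ with $A_i=\int_{t_{i-1}}^{t_i}(r-d-\lambda_s-\tfrac12 V_s)\,ds$ and $M_i=\int_{t_{i-1}}^{t_i}\sqrt{V_s}\,dW_s^S$, one has $|A_i|\le C\,\triangle t$ while $V$ and $\lambda$ stay bounded on $[0,t]$. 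Using elementary inequalities for $x\mapsto x^u$ (subadditivity when $u\le 1$, and $\bigl||a+b|^u-|b|^u\bigr|\le C_u(|a|^u+|a|\,|b|^{u-1})$ when $u\ge 1$) together with $\lfloor t/\triangle t\rfloor=O(1/\triangle t)$ summands, the drift contributes a term of order $(\triangle t)^{u/2}$ plus lower-order cross terms, which vanishes as $\triangle t\downarrow 0$; hence only the martingale increments $M_i$ survive in $(\triangle t)^{1-u/2}\sum_i|Y_{t_{i-1},t_i}|^u$.

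Next I would freeze the volatility coefficient over each short interval. Since $s\mapsto\sqrt{V_s}$ is continuous, for small $\triangle t$ one has $M_i\approx\sqrt{V_{t_{i-1}}}\,\bigl(W_{t_i}^S-W_{t_{i-1}}^S\bigr)=\sqrt{V_{t_{i-1}}}\,(\triangle t)^{1/2}Z_i$, where $Z_i:=(W_{t_i}^S-W_{t_{i-1}}^S)/(\triangle t)^{1/2}\sim N(0,1)$. The key structural point, which makes the leverage correlation $\rho$ harmless, is that $Z_i$ is the normalized increment of the Brownian motion $W^S$ on $[t_{i-1},t_i]$ and is therefore independent of $\mathcal{F}_{t_{i-1}}$, whereas $V_{t_{i-1}}$ is $\mathcal{F}_{t_{i-1}}$-measurable. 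Substituting gives
$$(\triangle t)^{1-u/2}\sum_{i=1}^{\lfloor t/\triangle t\rfloor}|M_i|^u\;\approx\;\triangle t\sum_{i=1}^{\lfloor t/\triangle t\rfloor}V_{t_{i-1}}^{u/2}\,|Z_i|^u.$$

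I would then conclude with a martingale law of large numbers together with Riemann-sum convergence. Because $Z_i$ is independent of $\mathcal{F}_{t_{i-1}}$ and $\mathbb{E}|Z_i|^u=\mu_u=2^{u/2}\Gamma(\tfrac{u+1}{2})/\Gamma(\tfrac12)$ is the $u$-th absolute moment of a standard normal, the centred quantities $\xi_i:=V_{t_{i-1}}^{u/2}\bigl(|Z_i|^u-\mu_u\bigr)$ form a martingale-difference array with respect to $(\mathcal{F}_{t_i})$; a second-moment estimate using $\mathbb{E}|Z_i|^{2u}<\infty$ and the boundedness of $V$ on $[0,t]$ shows that $\triangle t\sum_i\xi_i\to 0$ in probability. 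Simultaneously, by continuity of $s\mapsto V_s$, the Riemann sum $\mu_u\,\triangle t\sum_i V_{t_{i-1}}^{u/2}$ converges to $\mu_u\int_0^t V_s^{u/2}\,ds$. Combining these with the first step yields convergence in probability of the normalized power variation to $\mu_u\int_0^t V_s^{u/2}\,ds$, which is the assertion (the exponent being read as $u$).

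The main obstacle is making the freezing approximation of the second step rigorous: replacing $\int_{t_{i-1}}^{t_i}\sqrt{V_s}\,dW_s^S$ by $\sqrt{V_{t_{i-1}}}\,(W_{t_i}^S-W_{t_{i-1}}^S)$ incurs an error on each interval that is small individually but must be summed over $\lfloor t/\triangle t\rfloor=O(1/\triangle t)$ intervals, and under leverage $V$ is genuinely correlated with $W^S$, so one cannot simply condition the entire path away. The standard remedy is an $L^2$/Lenglart negligibility estimate: one bounds $\mathbb{E}\bigl[\bigl|\,|M_i|^u-V_{t_{i-1}}^{u/2}(\triangle t)^{u/2}|Z_i|^u\bigr|\mid\mathcal{F}_{t_{i-1}}\bigr]$ using the continuity modulus of $V$ and the Burkholder--Davis--Gundy inequality, and checks that the normalized sum of these conditional bounds tends to zero. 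This localized, martingale-based control is precisely the content of the high-frequency power-variation limit theorem of Barndorff-Nielsen and Shephard, so in practice I would either appeal to their theorem directly for the continuous It\^o process $X^{(1)}$ or reproduce its conditional second-moment computation in the present notation.
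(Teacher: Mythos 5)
Your proposal is correct and takes essentially the same route as the paper: the paper's own proof consists of a single sentence deferring to the power-variation limit theorem of Barndorff-Nielsen and Shephard, and your sketch (drift negligibility, freezing the volatility over each interval, then a martingale law of large numbers in which the increments' independence from $\mathcal{F}_{t_{i-1}}$ --- rather than independence of $V$ from $W^S$ --- neutralizes the leverage correlation $\rho$) is precisely the standard argument behind that cited theorem. If anything, your write-up supplies more detail than the paper, which omits the proof entirely; note only that the statement's $\mu_r$, $V_s^{r/2}$ and the stray subscript $i$ are typos for $\mu_u$ and $V_s^{u/2}$, exactly as you read them.
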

\begin{proof}
The proof is similar to the one of Barndorff-Nielsen and Shephard (\cite{Barndorff-Nielsen1}) and so we omit it here.
\end{proof}

Next we will see how power variation changes when jumps occur. Consider the log-price $X_t$ in our model such that $X_t=X^{(1)}_t+X^{(2)}_t$ with
\begin{equation}\label{X2}
X^{(2)}_t=\sum_{i=1}^{N_t}(e^{J^S_i}-1),
\end{equation}
where $N$ is a finite activity, simple counting process such that $N_t<\infty$ for all $t\geq 0$. Then the power variation of $X_t$ is reported by following lemma.
\begin{lemma}\label{L}
If $X^{(1)}_t$ and $X^{(2)}_t$ are independent and $0<u<2$, then the power variation of $X_t$ has the following form
$$ \{X\}^{[u]}(t)=\mu_r\int_{0}^{t}V^{\frac{u}{2}}_s ds.
$$
\end{lemma}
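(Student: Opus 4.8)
The plan is to exploit the finite-activity assumption on $N$: along almost every path the jump part $X^{(2)}$ has only finitely many jumps on $[0,t]$, and these will contribute nothing to the $u$-th order power variation when $0<u<2$ because of the deflating normalisation $(\triangle t)^{1-u/2}$. I would begin from the defining sum in \eqref{rpv}, writing each increment as $Y_{t_{i-1},t_i}=Y^{(1)}_{t_{i-1},t_i}+Y^{(2)}_{t_{i-1},t_i}$, where the superscripts denote the increments of $X^{(1)}$ and $X^{(2)}$.

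First I would fix a path with $N_t<\infty$ and take $\triangle t$ small enough that each sampling interval $[t_{i-1},t_i]$ contains at most one jump. Partition the index set into $\mathcal{J}$, the indices of intervals containing a jump, and its complement. On the complement $Y^{(2)}_{t_{i-1},t_i}=0$, so $Y_{t_{i-1},t_i}=Y^{(1)}_{t_{i-1},t_i}$ and these terms reproduce the sum defining $\{X^{(1)}\}^{[u]}(t)$ except for the finitely many indices in $\mathcal{J}$.

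The heart of the argument is to control the leftover pieces. The difference between the normalised sum for $X$ and the normalised sum for $X^{(1)}$ is
$$
(\triangle t)^{1-u/2}\sum_{i\in\mathcal{J}}\Bigl(\bigl|Y^{(1)}_{t_{i-1},t_i}+Y^{(2)}_{t_{i-1},t_i}\bigr|^u-\bigl|Y^{(1)}_{t_{i-1},t_i}\bigr|^u\Bigr).
$$
Since $|\mathcal{J}|\le N_t<\infty$, this sum has a fixed number of terms, each bounded along the path, while the prefactor $(\triangle t)^{1-u/2}\to0$ because $1-u/2>0$; hence the correction vanishes. Likewise, each $|Y^{(1)}_{t_{i-1},t_i}|$ over a jump interval is of order $(\triangle t)^{1/2}$, so $(\triangle t)^{1-u/2}\sum_{i\in\mathcal{J}}|Y^{(1)}_{t_{i-1},t_i}|^u\to0$ and deleting the $\mathcal{J}$-terms from the $X^{(1)}$-sum does not affect its limit either. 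The independence of $X^{(1)}$ and $X^{(2)}$ is used here to condition on the jump configuration, so that Lemma~\ref{l3.3} may be applied to $X^{(1)}$ on the complementary intervals.

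Combining these observations, the normalised power-variation sum for $X$ has the same probabilistic limit as that for $X^{(1)}$, which by Lemma~\ref{l3.3} equals $\mu_r\int_0^t V_s^{u/2}\,ds$, giving the claim. The main obstacle I anticipate is upgrading the pathwise estimate to convergence in probability: since the jump sizes $J^S$ need not be bounded uniformly over paths, I would make the $O(1)$ bound rigorous by a truncation/localisation argument — working first on the event $\{N_t\le n\}$ with bounded jump sizes, establishing the vanishing there, and then letting the truncation level tend to infinity. It is worth noting that the restriction $u<2$ is essential: at $u=2$ the normalisation is trivial and the jumps survive in the limit (recovering the quadratic-variation/realized-variance case), which is precisely why this separation of continuous and jump contributions fails for the variance swap but holds for the volatility measure treated here.
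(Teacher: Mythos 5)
Your proof is correct, but it takes a more self-contained route than the paper's. The paper disposes of the jump contribution by citing two external results: it applies Theorem 1 of Woerner (with $g(x)=x^u$, $0<u<2$) to get that the unnormalised power-variation sum of $X^{(2)}$ converges in probability to $\sum_{i=1}^{N_t}\left|e^{J^S_i}-1\right|^u$, which is finite along each path and hence annihilated by the factor $(\triangle t)^{1-u/2}$; it then invokes Theorem 4 of Barndorff-Nielsen and Shephard to write $\sum_i|X_{t_{i-1},t_i}|^u=\sum_i|X^{(1)}_{t_{i-1},t_i}|^u+\mathcal{O}_p(N_t)$, so the same normalisation also kills the cross-term error before Lemma \ref{l3.3} is applied. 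You instead prove both of these facts directly: your finite-activity pathwise argument (at most one jump per interval for small $\triangle t$, at most $N_t$ corrupted intervals, each correction term bounded along the path) replaces Woerner's theorem and the $\mathcal{O}_p(N_t)$ decomposition in one stroke, and the localisation on $\{N_t\le n\}$ upgrades the pathwise statement to convergence in probability. The two routes share the same skeleton --- jump and cross contributions amount to finitely many bounded terms, which die under $(\triangle t)^{1-u/2}$ precisely because $u<2$, and Lemma \ref{l3.3} supplies the limit of the continuous part --- but the paper's version buys brevity and a link to general semimartingale results, while yours makes the mechanism explicit and avoids the paper's rather vague ``$\mathcal{O}_p(N_t)$ with probability 1'' step. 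Two minor points: your claim that $|Y^{(1)}_{t_{i-1},t_i}|$ is of order $(\triangle t)^{1/2}$ pathwise is imprecise (L\'evy's modulus of continuity carries a logarithmic correction), but your argument only needs these finitely many terms to vanish, which continuity of $X^{(1)}$ already guarantees; and the jump-size truncation is dispensable, since along almost every path $\sum_{k\le N_t}\left|e^{J^S_k}-1\right|^u$ is already finite, so the correction tends to zero almost surely, which implies the convergence in probability you need.
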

\begin{proof}
Let $X^{(2)}_{t_{i-1},t_i}=X^{(2)}_{t_i}-X^{(2)}_{t_{i-1}}$ and
\begin{equation}\label{JX}
J(X^{(2)}_s)=X^{(2)}_s-X^{(2)}_{s-},
\end{equation}
respectively. Taking $g(x)=x^u$ ($0<u<2$) in Theorem 1 of \cite{Woerner}, we have
$$\sum_{i=1}^{\left \lfloor T/\triangle t \right \rfloor} |X^{(2)}_{t_{i-1},t_i}|^u \overset{p}{\rightarrow}\sum\left(\left|J(X^{(2)}_s)\right|^u:0<s\leq t \right)$$
as $\triangle t \rightarrow 0$. Thus, it follows from \eqref{X2} and \eqref{JX} that
$$\sum\left(\left|J(X^{(2)}_s)\right|^u:0<s\leq t \right) \overset{p}{\rightarrow}  \sum_{i=1}^{N_t}\left|e^{J^S_i}-1 \right|^u$$
and so
$$\sum_{i=1}^{\left \lfloor T/\triangle t \right \rfloor} |X^{(2)}_{t_{i-1},t_i}|^u \overset{p}{\rightarrow}\sum_{i=1}^{N_t}\left|e^{J^S_i}-1 \right|^u.
$$
Since $\sum_{i=1}^{N_t}\left|e^{J^S_i}-1 \right|^u$ is a constant, as
$\triangle t \rightarrow 0$, one has
$$(\triangle t)^\beta \sum_{i=1}^{\left \lfloor T/\triangle t \right \rfloor} |X^{(2)}_{t_{i-1},t_i}|^u \overset{p}{\rightarrow} 0$$
for $\beta>0$. Hence, the power variation of $X^{(2)}$ is zero. Furthermore, since $X_t=X^{(1)}_t+X^{(2)}_t$, similar to the proof of Theorem 4 in \cite{Barndorff-Nielsen1}, we can show that
$$\sum_{i=1}^{\left \lfloor T/\triangle t \right \rfloor} |X_{t_{i-1},t_i}|^u =\sum_{i=1}^{\left \lfloor T/\triangle t \right \rfloor} |X^{(1)}_{t_{i-1},t_i}|^u+\mathcal{O}_p(N_t),$$
where $\mathcal{O}_p(\cdot)$ is the equivalent quantity with probability 1.
Thus, it follows from Lemma \ref{l3.3} that
$$(\triangle t)^\beta \sum_{i=1}^{\left \lfloor T/\triangle t \right \rfloor} |X_{t_{i-1},t_i}|^u \overset{p}{\rightarrow} \mu_r\int_{0}^{t}V^{\frac{u}{2}}_s ds$$
for $\beta=1-\frac{u}{2}$. This delivers the required result by the definition of $\{X\}^{[u]}(t)$.
\end{proof}
\begin{remark}
Note that the probability limit of realised power variation is unaffected by the presence of jumps when $0<u<2$. In particular, if the volatility process $V(t)$ is continuous (the jump term $J^S$ is removed),  then
$$
V(t)=\mu_r^{-1} \left(\frac{ \partial \{X\}^{[u]}(t)}{\partial t}\right)^{1/u}.
$$
\end{remark}

\begin{proposition}\label{prop32}
The fair continuous volatility strike can be given by the following formula
$$K=E^{\mathbb{Q}} [RV]=\frac{1}{2\sqrt{\pi}T}\int_0^T\int_0^\infty\frac{1-E[e^{-sV_t}]}{s^{3/2}}ds dt\times100.$$
\end{proposition}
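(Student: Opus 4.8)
The plan is to derive the continuous-sampling strike in two stages. First I would let the number of sampling dates $N\to\infty$ (equivalently $\triangle t=T/N\downarrow 0$) to reduce $K$ to a time average of $E^{\mathbb{Q}}[\sqrt{V_t}]$, and then convert $E^{\mathbb{Q}}[\sqrt{V_t}]$ into the stated Laplace-transform integral by means of an integral representation of the square-root function.

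For the first stage, writing $\triangle t=T/N$ gives $\sqrt{\pi/(2NT)}=\sqrt{\pi/2}\,T^{-1}(\triangle t)^{1/2}$, so that
$$RV=\frac{100}{T}\sqrt{\frac{\pi}{2}}\,(\triangle t)^{1/2}\sum_{i=1}^{N}\left|\frac{S_{t_i}-S_{t_{i-1}}}{S_{t_{i-1}}}\right|.$$
Since $\frac{S_{t_i}-S_{t_{i-1}}}{S_{t_{i-1}}}=e^{Y_{t_{i-1},t_i}}-1=Y_{t_{i-1},t_i}\bigl(1+O(Y_{t_{i-1},t_i})\bigr)$, replacing the arithmetic returns by the log-returns $Y_{t_{i-1},t_i}$ alters the normalised sum only by a term controlled by $(\triangle t)^{1/2}\sum_i |Y_{t_{i-1},t_i}|^2$, which tends to $0$ because $\sum_i|Y_{t_{i-1},t_i}|^2$ converges to the finite quadratic variation. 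Hence Lemma \ref{L} with $u=1$, for which $\mu_1=\sqrt{2/\pi}$, yields
$$(\triangle t)^{1/2}\sum_{i=1}^{N}\left|\frac{S_{t_i}-S_{t_{i-1}}}{S_{t_{i-1}}}\right|\overset{p}{\rightarrow}\{X\}^{[1]}(T)=\sqrt{\frac{2}{\pi}}\int_0^T\sqrt{V_s}\,ds,$$
so that $RV\overset{p}{\rightarrow}\frac{100}{T}\int_0^T\sqrt{V_s}\,ds$. Passing the expectation through the limit (justified by uniform integrability, e.g. from an $L^2$ bound on the partial sums coming from finiteness of the quadratic variation) gives $K=E^{\mathbb{Q}}[RV]=\frac{100}{T}\int_0^T E^{\mathbb{Q}}[\sqrt{V_t}]\,dt.$

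For the second stage I would establish, for $0<\alpha<1$ and $x>0$, the identity $\int_0^\infty s^{-1-\alpha}(1-e^{-sx})\,ds=\alpha^{-1}\Gamma(1-\alpha)\,x^{\alpha}$, proved by a single integration by parts (both boundary terms vanish precisely because $0<\alpha<1$), which reduces the left side to $\frac{x}{\alpha}\int_0^\infty s^{-\alpha}e^{-sx}\,ds=\frac{x}{\alpha}\,x^{\alpha-1}\Gamma(1-\alpha)$. Setting $\alpha=\frac12$ and using $\Gamma(\frac12)=\sqrt{\pi}$ gives the key representation
$$\sqrt{x}=\frac{1}{2\sqrt{\pi}}\int_0^\infty\frac{1-e^{-sx}}{s^{3/2}}\,ds,\qquad x\ge 0.$$
Applying this at $x=V_t$, taking $E^{\mathbb{Q}}$, and interchanging expectation with the $s$-integral (legitimate by Tonelli, since $V_t\ge 0$ makes the integrand $1-e^{-sV_t}\ge 0$) yields
$$E^{\mathbb{Q}}[\sqrt{V_t}]=\frac{1}{2\sqrt{\pi}}\int_0^\infty\frac{1-E^{\mathbb{Q}}[e^{-sV_t}]}{s^{3/2}}\,ds.$$
Substituting this into $K=\frac{100}{T}\int_0^T E^{\mathbb{Q}}[\sqrt{V_t}]\,dt$ and combining the constants $\frac{100}{T}\cdot\frac{1}{2\sqrt{\pi}}=\frac{100}{2\sqrt{\pi}T}$ produces the asserted formula.

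I expect the genuine difficulty to lie in the first stage rather than the algebra of the second. Establishing the square-root representation and the Tonelli interchange is routine once the representation is written down, whereas the passage from the discrete sum of absolute arithmetic returns to the continuous power-variation limit requires care: one must control the discrepancy between arithmetic and log returns uniformly in $i$ and, above all, upgrade the in-probability convergence of $RV$ to convergence of $E^{\mathbb{Q}}[RV]$ by securing uniform integrability. The presence of jumps is harmless here because, as recorded in Lemma \ref{L} and the following remark, the realised power variation for $0<u<2$ is unaffected by the finite-activity jump part.
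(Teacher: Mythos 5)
Your proposal follows essentially the same route as the paper's own proof: pass to the continuous-sampling limit via Lemma \ref{L} with $u=1$ (giving $RV\to\frac{100}{T}\int_0^T\sqrt{V_t}\,dt$), then apply the Laplace representation $\sqrt{x}=\frac{1}{2\sqrt{\pi}}\int_0^\infty s^{-3/2}\left(1-e^{-sx}\right)ds$ together with Fubini/Tonelli, the only cosmetic difference being that the paper cites this identity from Sch\"{u}rger \cite{Schurger} while you prove it by integration by parts. Your treatment is correct and in fact more careful than the paper's, since you explicitly address the arithmetic-versus-log-return discrepancy and the uniform-integrability step needed to exchange the limit with $E^{\mathbb{Q}}$, both of which the paper glosses over.
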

\begin{proof}
We define the fair continuous volatility strike as such the present value of the contract at time zero is equal to zero. This corresponds to solving the equation
\begin{equation}\label{cont}
E^{\mathbb{Q}} [e^{-rT}(RV-K)]=0.
\end{equation}
Note that the realized volatility under continous samples can be expresses (let $r=1$ in Lemma \ref{L}) as
$$RV=
\lim_{N \rightarrow \infty}\sqrt{\frac{\pi}{2NT}}\sum_{i=1}^{N}\left|\frac{S_{t_i}-S_{t_{i-1}}}{S_{t_{i-1}}}\right|\times100
=\frac{1}{T}\int_{0}^{T}\sqrt{V_t}dt\times100.
$$
It follows from \cite{Schurger} that
$$\sqrt{X}=\frac{1}{2\sqrt{\pi}}\int_0^\infty \frac{1-e^{-sX}}{s^{3/2}}ds.$$
Taking expection on both sides and using Fubini's theorem we get
$$E^{\mathbb{Q}}[\sqrt{X}]=\frac{1}{2\sqrt{\pi}}\int_0^\infty \frac{1-E^{\mathbb{Q}}[e^{-sX}]}{s^{3/2}}ds,$$
where $E^{\mathbb{Q}}[e^{-sX}]$ is the characteristic function of the the stochastic variable $X$.
Choosing the $X$ above to the realized volatility we thus obtain a solution formula for the volatility strike price. Note that the formula (\ref{cont}) and using above Laplace transforms, we have that the strike price is given by
$$K=E^{\mathbb{Q}} [RV]=\frac{1}{2\sqrt{\pi}T}\int_0^T\int_0^\infty\frac{1-E^{\mathbb{Q}}[e^{-sV_t}]}{s^{3/2}}ds dt\times100.$$
This completes the proof.
\end{proof}

Here, the characteristic function of $V_t$ can refer to the Theorem \ref{theo} and be obtained by Remark \ref{r21}. Thus we need to use numerical integration techniques in order to solve above integral which yields the volatility strike price. This formula is very similar to the formula for the strike price in the case when no jumps were assumed \cite{Zhu15} although it is important to remember that the characteristics of the stochastic system are different.

\section{Numerical Examples}

We now investigate the impact of previous modelling assumptions and contractual designs on the fair values of volatility swaps. We begin by employing the effect of alternative assumptions about the stochastic process followed by the price of the underlying asset, the volatility process and the jump intensity process of Possion process. In particular, we give some numerical results under double exponential jump diffusion model.
\subsection{Baseline parameter value}

The main goal of our numerical analysis is to make clear how the jumps and intensity impact on the pricing volatility swaps. To make a reliable analysis, we take following baseline parameter values unless otherwise stated:
 $r=0.05$, $d=0.005$, $V_0=0.04$, $\rho=-0.64$, $\lambda_0=0.02$, $\sigma_V=0.6$, $\kappa_V=10$, $\theta_V=0.05$. These parameters are also adopted in \cite{Dragulescu} and \cite{Zhu15} by estimating the real market data. We show all numerical results in figures 1-5 and tables 1-2.

The stochastic processes in \eqref{real} can be discretized via the Monte Carlo methods studied in Section 8.2 in \cite{Schoutens} and so we can use Monte Carlo methods to price the volatility swaps. We also use the pricing formula in Propositions \ref{prop31} and \ref{prop32} to obtain the volatility swaps values. Some pricing results under different models are reported in \cite{Broadie,Carr07,Swishchuk,Zhu15} and we show these results in  Figure \ref{compare}. As shown in Figure \ref{compare}, volatility swap prices under various pricing formulas are distinct. We can see that, the results from Monte Carlo methods for \eqref{real} converge to our pricing formula, which confirm that our pricing formula is useful immediately. It is remarkable that the pricing formula from Zhu and Lian \cite{Zhu15} and our pricing formula are both under the definition of $RV$, and the volatility swaps values from our pricing formula are higher than \cite{Zhu15} because of the presence of jump risks, while \cite{Zhu15} employs Heston model. The results from Swishchuk \cite{Swishchuk} pricing formula gives the volatility swap value under continuous samplings, which are cheaper than prices from Zhu and Lian \cite{Zhu15} and our pricing formula. It may be caused the property of absolute value operator in the definition of $RV$.

In addition, pricing formulas form Carr \& Lee \cite{Carr07}, and Broadie \& Jain \cite{Broadie} are both studied from SVSJ model under the framework of definition $RV^*$. By comparisons between the results from various pricing formulas, we find that these values are higher than those under the definition of $RV$, which are consistent with the results from Remark \ref{311}. We also find the convergence from discrete to continuous samplings under two definition are distinct. The discrete-sampled volatility swaps price under $RV$ decreases progressively to the continuous-sampled price, while the convergence of price under $RV^*$ is opposite to $RV$.
\begin{figure}[H]
  \centering
  \includegraphics[height=0.5\textwidth]{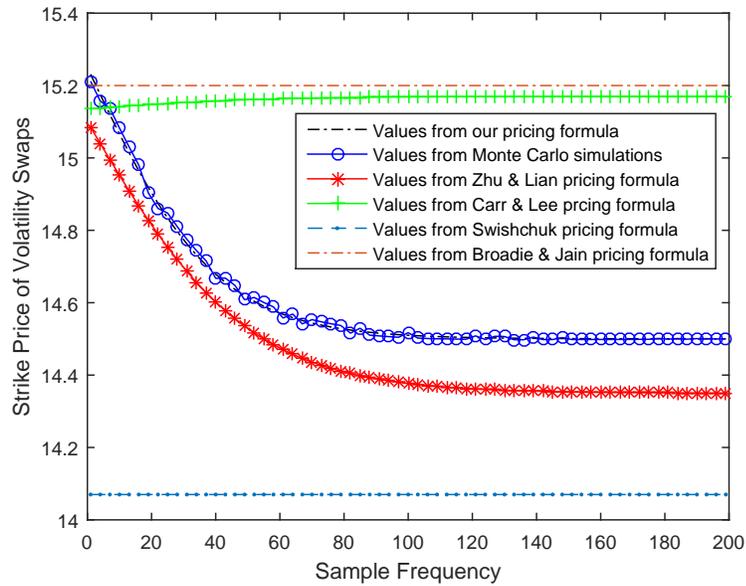}
  \caption{Strike price of volatility swaps with different pricing formulas in \cite{Broadie,Carr07,Swishchuk,Zhu15}}
  \label{compare}
\end{figure}
\subsection{Effect of Jumps}

\begin{figure}[H]
  \centering
  \includegraphics[height=0.5\textwidth]{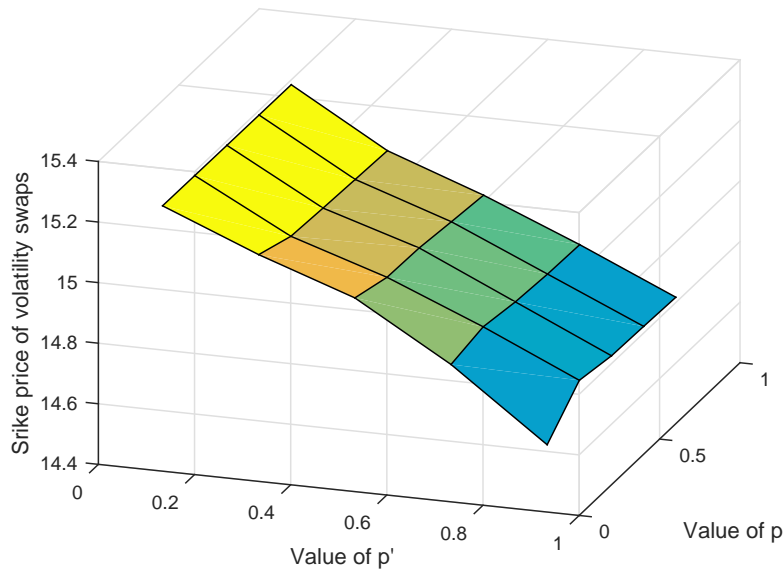}
  \caption{Strike price of volatility swaps with different $p$ and $p'$ values}
  \label{p}
\end{figure}

Firstly, we explore the jump effects to pricing volatility swaps under two independent double exponential jump diffusion model. More precisely, the jump sizes $J^S$ and $J^V$ have independent asymmetric double exponential distributions with density functions
$$f(y)=p\eta_1e^{-\eta_1}\mathbf{1}_{\{y\geq 0\}}+q\eta_2e^{-\eta_2}\mathbf{1}_{\{y\leq 0\}},\quad \eta_1>1,\eta_2>0$$
and
$$f(y)=p^{,}\eta_3e^{-\eta_3}\mathbf{1}_{\{y\geq 0\}}+q^{,}\eta_4e^{-\eta_4}\mathbf{1}_{\{y\leq 0\}},\quad \eta_3>1,\eta_4>0,$$
respectively, where $p, q, p',q'\geq 0 $ with $p+q=1$ and $p'+q'=1$ represent the probabilities
of upward and downward jumps, respectively.
\begin{table}[H]
\centering \caption{Strike price of volatility swaps with different $\eta_1$ and $\eta_2$ values}\label{tt1}
  \centering
\begin{tabular}{c c c c c c c c}
\hline
   &2.0 &3.0 &4.0 &5.0 &6.0 &7.0 &8.0\\ \hline
1.2 &15.0652 &15.0661 &15.0664 &15.0674 &15.0670 &15.0766 &15.1152\\
2.2 &15.0646 &15.0658 &15.0665 &15.0670 &15.0697&15.0763 &15.1152\\
3.2 &15.0638 &15.0652 &15.0663 &15.0665 &15.0694 &15.0760 &15.1149\\
4.2 &15.0634 &15.0641 &15.0659 &15.0663 &15.0691 &15.0756 &15.1145\\
5.2 &15.0640 &15.0634 &15.0647 &15.0665 &15.0687 &15.0750 &15.1150\\
6.2 &15.0632 &15.0639 &15.0635 &15.0656 &15.0675 &15.0742 &15.1154\\
7.2 &15.0614 &15.0612 &15.0630 &15.0637 &15.0664 &15.0727 &15.1160\\ \hline
\end{tabular}
\end{table}

\begin{table}[H]
\centering \caption{Strike price of volatility swaps with different $\eta_3$ and $\eta_4$ values}\label{tt2}
  \centering
\begin{tabular}{c c c c c c c c}
\hline
 &2.0 &3.0 &4.0 &5.0 &6.0 &7.0 &8.0\\ \hline
1.2 &15.1544 &15.1545 &15.1538 &15.1511 &15.1489 &15.1425 &15.1122\\
2.2 &15.1542 &15.1544 &15.1544 &15.1523 &15.1493 &15.1430 &15.1122\\
3.2 &15.1545 &15.1542 &15.1544 &15.1538 &15.1497 &15.1437 &15.1123\\
4.2 &15.1560 &15.1551 &15.1543 &15.1545 &15.1507 &15.1447 &15.1124\\
5.2 &15.1576 &15.1571 &15.1560 &15.1543 &15.1535 &15.1459 &15.1127\\
6.2 &15.1601 &15.1592 &15.1581 &15.1571 &15.1542 &15.1479 &15.1134\\
7.2 &15.1614 &15.1624 &15.1626 &15.1614 &15.1578 &15.1510 &15.1145\\ \hline
\end{tabular}
\end{table}
From Figure \ref{p}., we notice that with the increasing of $p$ or $p'$ values, the values of discrete
volatility swaps are decreasing. From Tables \ref{tt1} and \ref{tt2}, we can also observe that, when
the the probabilities
of upward and downward jumps changing, the value of volatility swaps are also fluctuating. This implies that
the parameters $\eta_1$ , $\eta_2$ , $\eta_3$  and $\eta_4$  have effect on the values of volatility swaps.
Finally, we can see that, when jump uncertainty is increasing, the value of volatility swaps is filling correspondingly. The implication
is that the jump diffusion can impact and change the value of a volatility swap, ignoring the effect of jumps will
result in miss-pricing. Working
out the analytical pricing formula for discretely-sampled volatility swaps can help pricing
volatiltiy swaps more accurately.

\subsection{Effect of Stochastic Intensity}

\begin{figure}[H]\centering
\includegraphics[width=.48\textwidth]{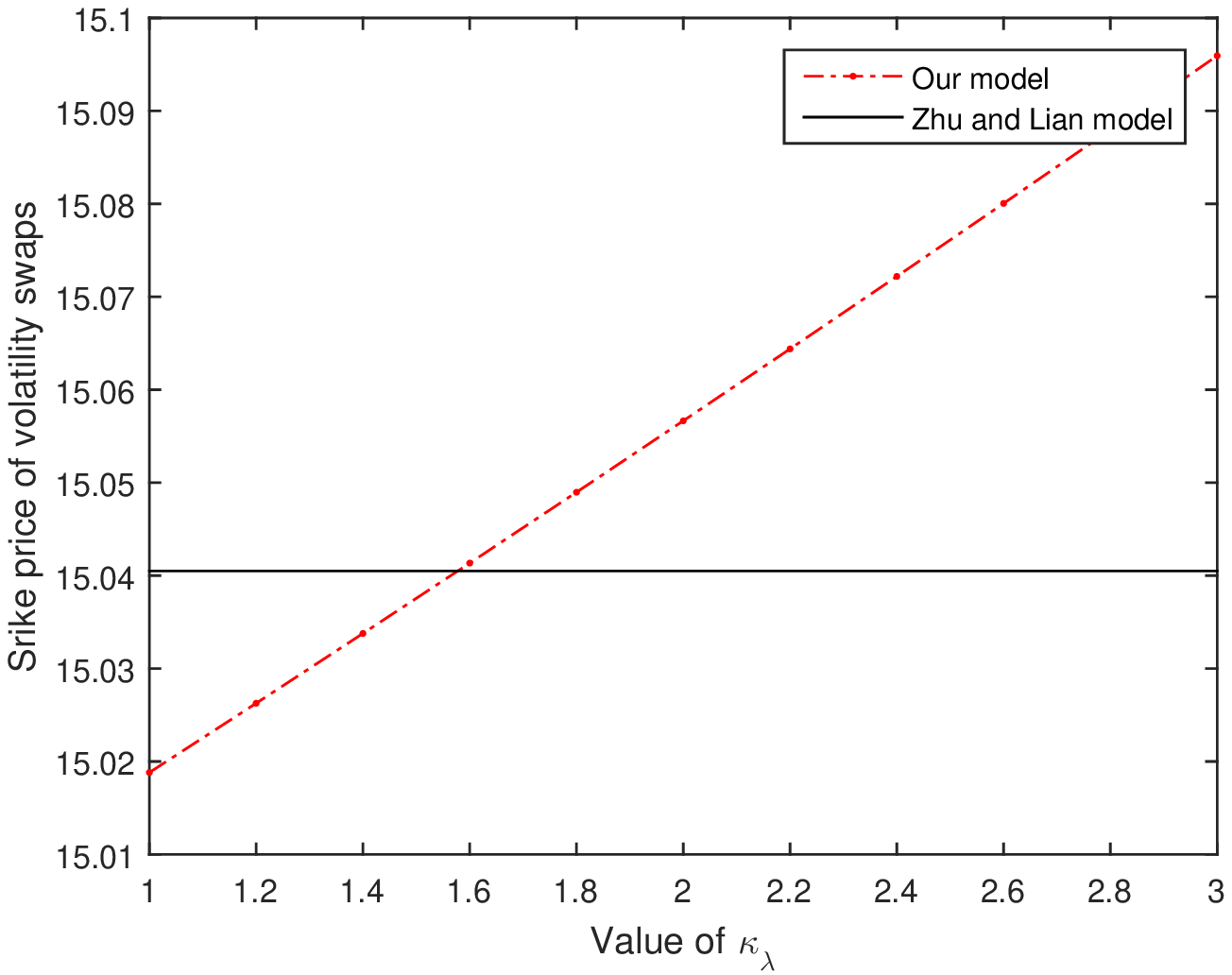}
\includegraphics[width=.48\textwidth]{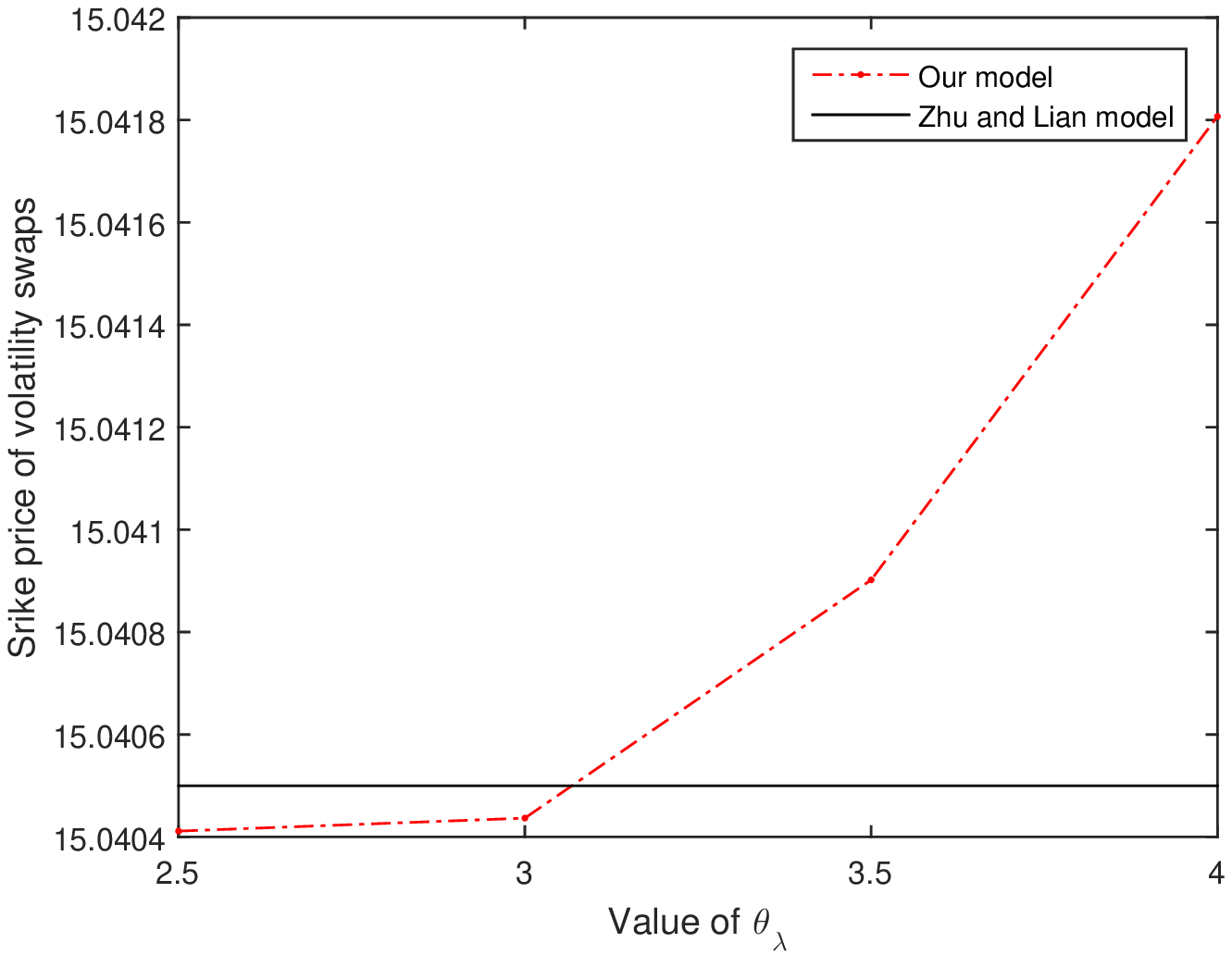}
\includegraphics[width=.48\textwidth]{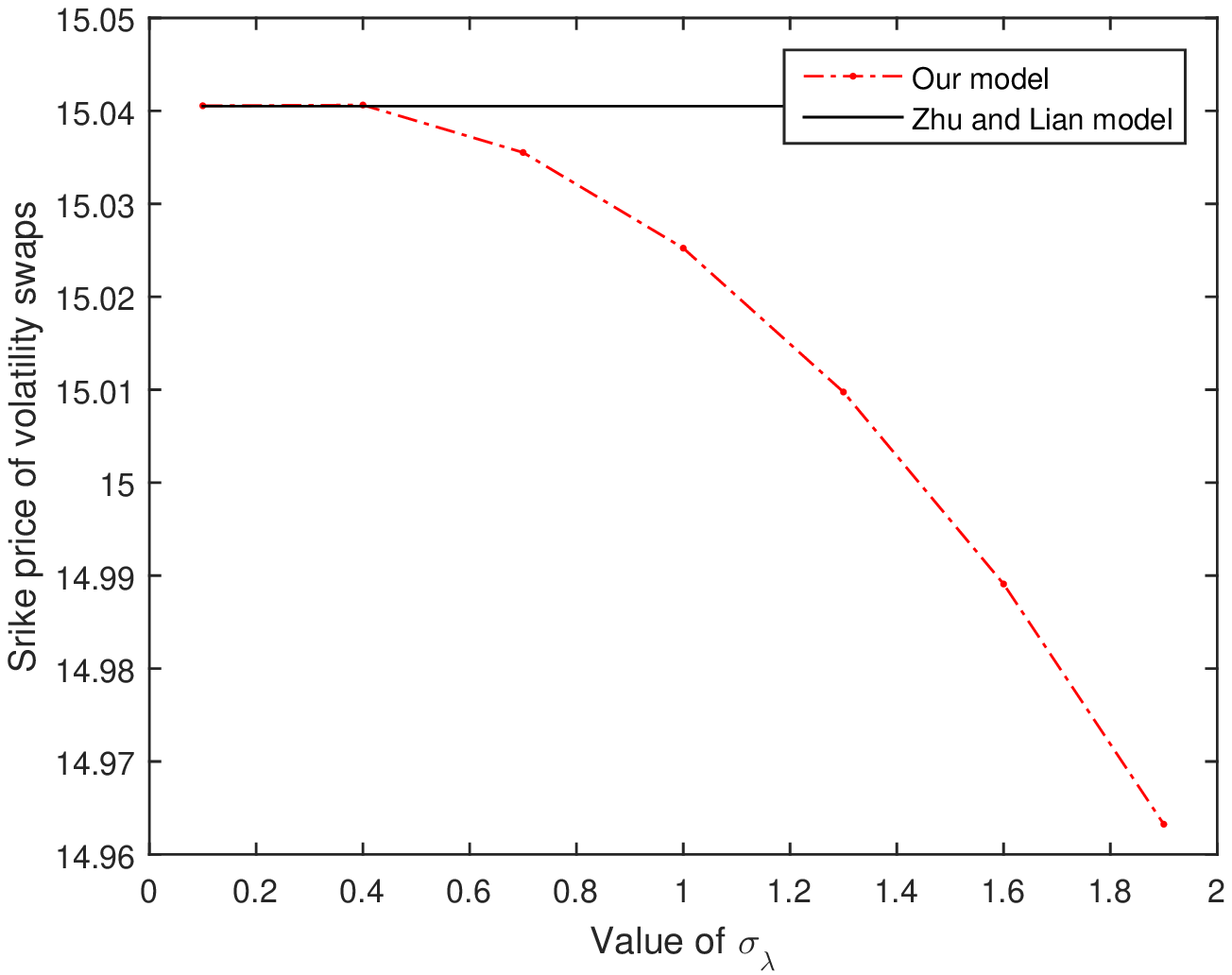}
\caption{Strike price of volatility swaps with different parameters}
  \label{para}
\end{figure}

We also test the sensitive effect of stochastic intensity to the strike price of volatility swaps. Figure \ref{para} shows the following facts: (i) the value of the volatility swap is increasing when $\kappa_\lambda$ is increasing; (ii) the value of the volatility swap is increasing when $\theta_\lambda$ is increasing; (iii)
the value of the volatility swap is decreasing when $\sigma_\lambda$ is increasing; (iv) the price change of the volatility swap is more sensitive with respective to
$\theta_\lambda$ and $\sigma_\lambda$.  Figure \ref{para} also shows that our numerical results are quite different from the ones of  Zhu and Lian \cite{Zhu15}.

Taken all together, we find that the volatility of intensity processes can impact and change the value of volatility swaps, ignoring the effect of the intensity will result in miss-pricing. In addition, it is notable that the sensitivities of the strike price of volatility swaps with respective to $\kappa_\lambda$, $\theta_\lambda$ and $\sigma_\lambda$ are quite different.  Therefore, to consider the effect of the intensity is very essential for pricing volatility swaps.

\section{Conclusions}
The main purpose of this paper is to propose a new stochastic volatility model with jumps and stochastic intensity and study the volatility swaps valuation problem described by this model. By using Feynman-Kac theorem, we deliver the joint moment generating function of this model via a partial integral differential equation.
Moreover, we derive the discrete and continuous sampled volatility swap pricing formulas by employing transform techniques and show the relationship between two pricing formulas. The contributions of this paper can be summarized as follows: (i) proposes stochastic volatility with jumps and stochastic intensity model at the first time; (ii) derives the joint moment generating function of this model by using the affine structure method introduced by Duffie \cite{Duffie} and the results presented in Yang et al. \cite{Yang18}; (iii) gives the pricing formula for discrete and continuous samples, respectively; (iv) shows the impacts of jumps and stochastic intensity on the fair strike price of volatility swaps.

Though this paper is focused on pricing the volatility swaps under the stochastic volatility model with jumps and stochastic intensity, the analytical procedures can be employed to some other models for the underlying asset price.  Therefore, the method developed in this paper can be extended to other pricing problems in connection to L\'evy processes with other stochastic volatilities, such as a VG process with the GARCH volatility. We leave these problems for our future work.

\end{document}